\DeclareMathOperator{\bmp}{\otimes}
\title{Boolean Matrix Factorization via Nonnegative Auxiliary Optimization}
\author{ 
	Duc P. Truong \\
	Computer, Computational and Statistics Division\\
	Los Alamos National Laboratory, USA\\
	\texttt{dptruong@lanl.gov} \\
	\And
	Erik Skau \\
	Computer, Computational and Statistics Division\\
	Los Alamos National Laboratory, USA\\
	\texttt{email} \\
	\And
	Derek Desantis \\
	Theoretical Division\\
	Los Alamos National Laboratory, USA\\
	\texttt{email} \\
	\And
	Boian Alexandrov \\
	Theoretical Division\\
	Los Alamos National Laboratory, USA\\
	\texttt{email} \\
}
\begin{document}
\maketitle

\begin{abstract}
	A novel approach to Boolean matrix factorization (BMF) is presented. 
	Instead of solving the BMF problem directly, this approach solves a nonnegative optimization problem with the constraint over an auxiliary matrix whose Boolean structure is identical to the initial Boolean data. Then the solution of the nonnegative auxiliary optimization problem is thresholded to provide a solution for the BMF problem. We provide the proofs for the equivalencies of the two solution spaces under the existence of an exact solution. Moreover, the nonincreasing property of the algorithm is also proven. Experiments on synthetic and real datasets are conducted to show the effectiveness and complexity of the algorithm compared to other current methods.
\end{abstract}

\newcommand{\XBij}{(X_B)_{ij}}
\newtheorem{theorem}{Theorem}
\newtheorem{example}{Example}
\newcommand{\B}{\mathbb{B}}
\newcommand{\R}{\mathbb{R}}
\newcommand{\rankb}{\operatorname{rank}_{\B}}
\newcommand{\ranknn}{\operatorname{rank}_{+}}
\newcommand{\rank}{\operatorname{rank}}
\newcommand{\support}{\operatorname{support}}
\newcommand{\BMF}{BANMF}
\renewcommand{\algorithmautorefname}{Algorithm}

\section{Introduction}
Many research fields, such as personalized medicine, space research, social sciences, climate research, nonproliferation, emergency response, finances, etc. collect and generate large-scale datasets. Analysis of such datasets is difficult, since often the underlying fundamental processes (or features) remain hidden or latent (i.e., not directly observable) \cite{everett2013introduction}. 
Extracting such latent features reveals valuable information and hidden causality and relations. One of the most powerful tools for extracting latent (hidden) features from data is factor analysis \cite{spearman1961general}. 
Traditionally, factor analysis approximates a data-matrix $X \in \R^{n \times m}$ by a product of two low-rank (factor) matrices, $X \approx WH$, where $W \in \R^{n \times k}$, $H \in \R^{k \times m}$, and $k \ll n,m$. Various types of factorizations can be obtained by imposing different constraints. For instance, imposing orthogonality on the factors leads to the singular value decomposition (SVD) \cite{Stewart1993}, while the nonnegativity constraint leads to non-negative matrix factorization (NMF) \cite{lee1999learning}. In a lot of investigations, the variables are simple dichotomies, \{\textit{false}, \textit{true}\}, that is, the data contains only binary values, $\{0, 1\}$. For example, in relational databases, an object-attribute relation is represented by a Boolean variable, which takes value $\{1\}$-\textit{true}, if the object has this attribute, or $\{0\}$-\textit{false}, otherwise. In this case, all the values of the factors have to be $0$ or $1$. In this case, instead of simple arithmetic we also need to use Boolean algebra: instead of “plus” and “times” we need the logical operations of “or” with “and”, which results in a Boolean factorization \cite{miettinen2008discrete}. To be precise, letting $\B=\{0,1\}$,
and assuming the input data is a matrix  $X \in \B^{n \times m}$, where $\B=\{0,1\}$, Boolean matrix factorization (BMF) is looking for two binary matrices $W \in \B^{n \times k}$, $H \in \B^{k \times m}$ such that $X \approx W \otimes_B H$, where $(W \otimes_B H)_{ij}=\vee_{l=1}^k W_{il} \land H_{lj} \in \B$ and $\vee$ and $\land$ are the logical "or" and "and" operations. The \textit{Boolean rank} of $X$ is the smallest $k$ for which such an exact representation, $X=W \otimes_B H$, exists. Interestingly, in contrast to the non-negative rank, the Boolean rank can be not only bigger or equal, but also smaller than the real rank \cite{Monson1995, desantis2020factorizations}. 

In this paper, we propose a new algorithm for BMF, we call Boolean Auxiliary Non-negative Matrix Factorization (\BMF{}).  Here, we consider  a specific constrained optimization over an auxiliary matrix $Y$ related to the initial binary data matrix $X$. By performing the constrained optimization over the auxiliary matrix we are able to simulate a Boolean matrix product by enforcing the support of the matrix product and disregarding the exact matrix values. From the solution of our relaxed problem we apply thresholding to our solution matrices $W$ and $H$ to arrive at the BMF solution. We demonstrate that our method outperforms other state-of-the-art methods for the BMF problem when there exists a discrepancy between the nonnegative and Boolean rank.  Moreover, \BMF{} has superior performance under corruption by Bernoulli noise and at various density levels.  Furthermore \BMF{} is shown to have faster run times than PNL-PF, a competitor algorithm, for large datasets.

\section{Related Work}

The BMF problem is NP-hard \cite{miettinen2008discrete}, leading to the development of a variety of strategies to approximate a solution. Greedy algorithms were some of the first proposed methods to solving BMF \cite{miettinen2008discrete, belohlavek2010discovery, lucchese2010mining}. There is also a family of Bayesian methods. For example, the authors in \cite{rukat2017bayesian} use a probabilistic generative model and derive a sampling method to accelerate the optimization for BMF. In \cite{ravanbakhsh2016boolean} they use a Bayesian framework to consider the BMF problem as a maximum log likelihood problem and use a message passing procedure to approximate the MAP assignment.

An alternative approach to BMF is to relax the Boolean constraint into a non-negativity constraint.  Several different methods to relax BMF to a NMF problem exist. The thresholding method in \cite{zhang2007binary} attempts take an NMF solution to a binary matrix factorization solution.  This idea can easily be modified to search for a BMF solution by searching for a optimal thresholding values. The Post Nonlinear Penalty Function Algorithm (PNLPF) \cite{miron178boolean} attempts to improve the relaxation by putting a nonlinear function on the product WH. The nonlinear Heaviside function clips arbitrarily large numbers in the product WH back to the [0,1] interval. This composition of regular matrix product and the nonlinear function approximates a Boolean matrix product resulting in a BMF estimation algorithm.

\section{Proposed Approach}
Given a Boolean matrix $X \in \B^{M \times N}$, the BMF optimization problem is
\begin{equation}
\begin{aligned}
& \underset{W,H}{\text{minimize}}
& & ||X-W \bmp_{B} H||_F \\
& \text{subject to}
& & W \in \B^{M \times k},\\
& & & H \in \B^{k \times N},
\end{aligned}
\label{boolean_nmf}
\end{equation}
where \(\bmp_{B}\) denotes \textit{Boolean matrix product } $(W \bmp_B H)_{ij} = \lor_{l=1}^k W_{il} \land H_{lj}$, and $||...||_F$ is the Frobenius norm.

\subsection{Nonnegative auxiliary optimization for BMF} 
Here we propose an alternative nonnegative optimization problem  to solve the BMF problem~(\ref{boolean_nmf}) which is comprised of two steps.  The first step is based on NMF optimization with additional constrains on an auxiliary matrix $Y$.  Given a solution to the first step, the second step consists of thresholding to obtain a Boolean factors.  

We now describe the nonnegative auxiliary optimization problem.  Given a binary matrix $X$, we search for a rank \(k\) Boolean decomposition by solving the optimization,
\begin{equation}
\begin{aligned}
& \underset{Y,W,H}{\text{minimize}}
& & ||Y-W H||_F \\
& \text{subject to}
& & 1 \leq Y_{i,j} \leq k, \text{ if } X_{i,j} = 1 \\
& & & Y_{i,j} = 0, \text{ if } X_{i,j} = 0 \\
& & & W \in \R_+^{N \times k},\\
& & & H \in \R_+^{k \times M}.
\end{aligned}
\label{prob:NAO}
\end{equation}
The solution $(Y,W,H)$ of problem~(\ref{prob:NAO}) is not necessarily Boolean. For the second step, we convert the non-negative $(W,H)$ to Boolean matrices $(\hat{W},\hat{H})$ according to a thresholding function.  Given threshold parameter $\delta \geq 0$, we define
\begin{equation}
	\hat{C}_{ij} = 
\begin{cases}
	\begin{aligned}
		1 & \quad \text{ if } C_{ij} > \delta\\
		0 & \quad \text{ if } C_{ij} \leq \delta
	\end{aligned}
\end{cases}
\label{eqn:Booleanization}
\end{equation}
In theory, we use $\delta = 0$.  However as will be discussed in Algorithm \ref{alg:booleanization}, $\delta >0$ in practice. 

Our method can be viewed as an NMF approach for BMF problem similar to the other relaxation approaches in the literature~\cite{zhang2007binary, miron178boolean}. However, by adding the auxiliary matrix constraint, the \BMF{} approach forces the support of NMF to adhere to that of a BMF while disregarding the exact values.  The subsequent results establish this.  Under the existence of an exact decomposition, below we prove the equivalencies of the two solution spaces. This will be done with two theorems, one theorem showing containment in each direction. The following theorem proves that an exact decomposition solution of BMF problem (\ref{boolean_nmf}) is also a solution of our relaxed nonnegative optimization problem (\ref{prob:NAO}).

\begin{theorem}
Given an exact decomposition solution $(W,H)$ to problem~(\ref{boolean_nmf}) such that $X = W \bmp_B H$, there exists $Y \in \R_+^{N \times M}$ such that $(Y,W,H)$ is an exact decomposition solution for the optimization problem (\ref{prob:NAO}).
\label{theorem:bmf to mc}
\end{theorem}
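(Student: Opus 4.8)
The plan is to exhibit the auxiliary matrix $Y$ explicitly by taking the ordinary (real) matrix product, $Y = WH$, and then to verify that this choice satisfies every constraint of problem~(\ref{prob:NAO}) while attaining objective value zero. Since the objective $\|Y - WH\|_F$ is nonnegative and vanishes precisely at $Y = WH$, any feasible point of this form is automatically an exact decomposition solution; hence the entire burden of the proof is to check feasibility.

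First I would record the trivial facts that the Boolean factors are in particular nonnegative, i.e. $W \in \B^{N \times k} \subseteq \R_+^{N \times k}$ and $H \in \B^{k \times M} \subseteq \R_+^{k \times M}$, so the nonnegativity constraints on the factors, as well as $Y = WH \in \R_+^{N\times M}$, come for free. It then remains only to check the two entrywise constraints on $Y$.

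The key observation is the elementary relationship between the Boolean and the real product of $\{0,1\}$-matrices: for each index $(i,j)$, $(WH)_{ij} = \sum_{l=1}^{k} W_{il} H_{lj}$ is a sum of $k$ terms, each equal to $W_{il} \land H_{lj} \in \{0,1\}$. Consequently $(WH)_{ij} = 0$ exactly when $(W \bmp_B H)_{ij} = 0$, and $1 \le (WH)_{ij} \le k$ exactly when $(W \bmp_B H)_{ij} = 1$. Invoking the hypothesis $X = W \bmp_B H$, I would then conclude: if $X_{ij} = 0$ then $Y_{ij} = (WH)_{ij} = 0$, which matches the second constraint of~(\ref{prob:NAO}); and if $X_{ij} = 1$ then $1 \le Y_{ij} = (WH)_{ij} \le k$, which matches the first constraint. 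This establishes feasibility of $(Y,W,H)$, and with the objective-value remark above, the theorem follows.

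There is no genuine obstacle here; the content is essentially the remark that, on Boolean factors, the real product shares the support of the Boolean product and is bounded above by the inner dimension $k$. The only point to be careful about is the direction of the claimed containment: this theorem asserts only $\mathrm{BMF} \Rightarrow$~(\ref{prob:NAO}), so I would resist arguing the converse (that every optimal $(Y,W,H)$ of~(\ref{prob:NAO}) thresholds back to a BMF solution), which is the content of the companion theorem referenced in the surrounding text.
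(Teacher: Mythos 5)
Your proposal is correct and follows essentially the same route as the paper's own proof: set $Y = WH$, note the objective is zero by construction, and verify feasibility entrywise using the fact that on Boolean factors the real product $(WH)_{ij}$ vanishes exactly when $(W \bmp_B H)_{ij}=0$ and lies in $[1,k]$ when $(W \bmp_B H)_{ij}=1$. If anything, your handling of the zero case (arguing each term $W_{il}H_{lj}=0$ rather than asserting $W_{il}=H_{lj}=0$ for all $l$) is slightly more careful than the paper's wording.
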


\begin{proof}

Given an an exact solution $(W, H)$ such that $X = W \bmp_B H$, we construct a \(N \times M\) nonnegative matrix $Y = WH$.

Now we need to show that $(Y, W, H)$ is an exact solution for the optimization problem (\ref{prob:NAO}). First, by construction, the optimized value is already zero. Next, we need to show that the solution satisfies the constraints.\\
(a) \(W \in \B^{N \times k} \Rightarrow W \in \R_+^{N \times k}\)\\
(b) \(H \in \B^{k \times M} \Rightarrow H \in \R_+^{k \times M}\)\\
(c) For \((i,j) \notin \operatorname{support}(X)\), which means \(X_{ij} = 0\)
\(\Rightarrow W_{il} = H_{lj} = 0 \ \forall l = 1,\dots,k.\) by the definition  of Boolean matrix product in \autoref{boolean_nmf}\\
\(\Rightarrow Y_{ij} = (WH)_{ij} = \sum_{l=1}^k (W)_{il}(H)_{lj} = 0\)\\
(d) For \((i,j) \in \operatorname{support}(X)\), which means \(X_{ij} = 1\)
\(\Rightarrow \exists l \in \{1,\dots,k\}\) such that \(W_{il} = H_{lj} = 1\). Since \(W_{i,:},\ H_{:,j} \in \mathbb{R}^{k}\), we have \(1 \leq (WH)_{ij} \leq k \Rightarrow 1 \leq Y_{ij} = (WH)_{ij} \leq k \)

Therefore, from (a),(b),(c), (d) and the fact that the optimized value is zero, we have shown that \(Y=WH\) is an exact solution for the optimization problem (\ref{prob:NAO}) as needed.
\end{proof}

This proves that the BMF solution set is contained in the \BMF{} solution set when there exists an exact decomposition. The next theorem proves containment the other direction, that the \BMF{} solution set, after Booleanizing the solutions, is contained in the BMF solution set when there exists an exact decomposition.

\begin{theorem}
	Given an exact solution $Y = WH$ for the \BMF{} problem (\ref{prob:NAO}), and \(\hat{W}\) and \(\hat{H}\) are the Booleanized matrices of \(W\) and \(H\) with threshold $\delta = 0$, then $X = \hat{W} \bmp_B \hat{H}$ is an exact solution for the BMF problem (\ref{boolean_nmf}).
	\label{theorem:mc to bmf}

\end{theorem}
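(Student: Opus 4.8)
The plan is to verify the identity $X = \hat{W} \bmp_B \hat{H}$ entrywise, splitting into the two cases determined by whether a coordinate $(i,j)$ lies in $\operatorname{support}(X)$. Throughout I would lean on the feasibility constraints of problem~(\ref{prob:NAO}) together with the nonnegativity of $W$ and $H$; that nonnegativity is the crux, since a sum of nonnegative numbers vanishes precisely when every term vanishes.

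First, suppose $(i,j) \notin \operatorname{support}(X)$, i.e. $X_{ij} = 0$. The constraint in~(\ref{prob:NAO}) forces $Y_{ij} = 0$, and since $Y = WH$ this reads $\sum_{l=1}^k W_{il} H_{lj} = 0$. As every summand is nonnegative, $W_{il} H_{lj} = 0$ for each $l$, so for each $l$ at least one of $W_{il}, H_{lj}$ equals $0$. With $\delta = 0$ the Booleanization rule~(\ref{eqn:Booleanization}) then gives $\hat{W}_{il} = 0$ or $\hat{H}_{lj} = 0$, hence $\hat{W}_{il} \land \hat{H}_{lj} = 0$; taking the disjunction over $l$ yields $(\hat{W} \bmp_B \hat{H})_{ij} = 0 = X_{ij}$. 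Conversely, suppose $(i,j) \in \operatorname{support}(X)$, i.e. $X_{ij} = 1$. The constraint gives $1 \leq Y_{ij} \leq k$, so in particular $\sum_{l=1}^k W_{il} H_{lj} = Y_{ij} \geq 1 > 0$. Hence there is at least one index $l$ with $W_{il} H_{lj} > 0$, i.e. $W_{il} > 0$ and $H_{lj} > 0$; with $\delta = 0$ this gives $\hat{W}_{il} = \hat{H}_{lj} = 1$, so $\hat{W}_{il} \land \hat{H}_{lj} = 1$ and therefore $(\hat{W} \bmp_B \hat{H})_{ij} = 1 = X_{ij}$. Combining the two cases gives $X = \hat{W} \bmp_B \hat{H}$ entrywise, whence the objective of~(\ref{boolean_nmf}) evaluates to $0$ at $(\hat{W},\hat{H})$, so this pair is an exact (indeed optimal) solution of the BMF problem.

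As for difficulty, there is no genuine obstacle here: the argument is a direct case check and mirrors parts (c),(d) of the proof of Theorem~\ref{theorem:bmf to mc} run in reverse. The only points requiring care are the interplay between the strict/non-strict inequalities in~(\ref{eqn:Booleanization}) and the sign conditions on $W,H$, and noticing which constraint does the work: the upper bound $Y_{ij}\le k$ is irrelevant in this direction (it only keeps $Y$ feasible), while it is the \emph{strict} positivity implied by $Y_{ij}\ge 1$ on the support that forces at least one surviving rank-one term after thresholding. It may also be worth remarking that the same reasoning goes through for any threshold $\delta \in [0,1)$, but $\delta = 0$ as stated suffices.
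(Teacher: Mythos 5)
Your proof is correct and is essentially the paper's own argument made explicit: the paper's first displayed fact, that $(\hat{W}\bmp_B\hat{H})_{ij}$ equals $0$ or $1$ according to whether $(WH)_{ij}=0$ or $(WH)_{ij}>0$, is exactly what your nonnegativity case analysis establishes, and its second fact is the feasibility constraint read through $Y=WH$, so the two proofs coincide step for step. One caution about your closing aside: the claim that the same reasoning works for every $\delta\in[0,1)$ is false in the support case --- for example with $k=1$, $W_{i1}=0.5$, $H_{1j}=2$ the point $Y_{ij}=1$ is feasible with $X_{ij}=1$, yet thresholding at $\delta=0.7$ gives $\hat{W}_{i1}=0$ and hence $(\hat{W}\bmp_B\hat{H})_{ij}=0$ --- so only $\delta=0$ (as the theorem states) is safe without further assumptions on the factors.
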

\begin{proof}
	Suppose that \(Y\), \(W\), and \(H\) is an exact solution for the \BMF{} problem, and that \(\hat{W}\) and \(\hat{H}\) are the Booleanized matrix of \(W\) and \(H\) with threshold $\delta = 0$. We show that \(X = \hat{W} \bmp_B \hat{H}\).\\
	First, following the logic in the proof of \autoref{theorem:bmf to mc}, we have that:\\
	\begin{equation}
		(\hat{W}\bmp_B \hat{H})_{ij} = 
		\begin{cases}
			\begin{aligned}
				&0 &\text{ if } (WH)_{ij} = 0 \\
				&1 &\text{ if } (WH)_{ij} > 0
			\end{aligned}
		\end{cases}
	\end{equation}
	Second, by the constraint of the optimization and the assumption that \(Y=WH\), we also have:\\
	\begin{equation}
		X_{ij} = 
		\begin{cases}
			\begin{aligned}
				&0 &\text{ if } Y_{ij} = (WH)_{ij} = 0 \\
				&1 &\text{ if } Y_{ij} = (WH)_{ij} > 0
			\end{aligned}
		\end{cases}
	\end{equation}

	These two above facts imply that \(X = \hat{W} \bmp_B \hat{H}\) or that \(\hat{W}\) and \(\hat{H}\) is an exact solution for the BMF problem. 
\end{proof}

\subsection{Algorithm}
Our implementation of the \BMF{} algorithm cycles between updating the NMF factors $W$, $H$, and the corresponding auxiliary matrix $Y$.  We use the multiplicative update method introduced by Lee and Seung \cite{lee2001algorithms} for the $W$ and $H$ updates, but any other alternating NMF algorithms are also suitable.  For the update of the auxiliary variable $Y$, we  project $Y$ onto the feasible set defined by the constraints. \autoref{alg:NMC} is an outline of a procedure to solve the \BMF{} problem (\ref{prob:NAO}). To find the appropriate threshold parameters $\delta_W$ and $\delta_H$, and the corresponding Boolean matrices, $\hat{W}$ and $\hat{H}$, we employ \autoref{alg:booleanization}.

\begin{algorithm}
	\SetAlgoLined
	\KwIn{Boolean matrix \(X \), latent dimension \(k\), maximum iteration \(Niter\) }
	\KwResult{\(Y, W, H\)}
	Initialization: \(W = rand(N,k)\), \(H = rand(k,M)\), \(Y = X\) \\
	
	\lWhile{iter \(\leq\) Niter}
	{\par
		(1) Update \(W\) using NMF multiplicative update \\
		\(W_{ij} = W_{ij} \dfrac{(YH^T)_{ij}}{(WHH^T)_{ij}}\)\\
		(2) Update \(H\) using NMF multiplicative update \\
		\(H_{ij} = H_{ij} \dfrac{(W^TY)_{ij}}{(W^TWH)_{ij}}\)\\
		(3) Update \(Y_{ij}\) for \((i,j) \in support(X)\) \\
		\(Y_{ij} = \begin{cases}
			1 \quad \text{if} \quad (WH)_{ij} < 1 \\
			(WH)_{ij} \quad \text{if} \quad 1 \leq (WH)_{ij} \leq k \\
			k \quad \text{if} \quad (WH)_{ij} \geq k
		\end{cases}\)\\
		\textbf{endwhile}}
	\caption{\BMF{}}
	\label{alg:NMC}
\end{algorithm}

\begin{algorithm}
	\SetAlgoLined
	\KwIn{Boolean matrix \(X \), nonnegative matrices \(W\) and \(H\), number of points \(npoint\)}
	\KwResult{\(\hat{W}, \hat{H}\)}

	\(W_{point} = \text{linspace(min(W), max(W), npoint)}\)\\
	\(H_{point} = \text{linspace(min(H), max(H), npoint)}\)\\
	\(\delta^*_W, \delta^*_H = \underset{\delta_W \in W_{point}, \delta_H \in H_{point}}{\operatorname{argmin}} |X - \hat{W}^{\delta_W}\bmp\hat{H}^{\delta_H}|\)\\
	\(\hat{W} = \hat{W}^{\delta^*_W}\)\\
	\(\hat{H} = \hat{H}^{\delta^*_H}\)\\
	\caption{Booleanization}
	\label{alg:booleanization}
\end{algorithm}

\subsubsection{Convergence}

We now discuss the convergence of our implementation of the \BMF{} algorithm. We show that Algorithm \ref{alg:NMC} results in a non-increasing updates to the objective function:

\begin{theorem}[Nonincreasing property of \BMF{} algorithm]
	The objective function \(||Y-WH||_F\) in the boolean auxiliary nonnegative matrix factorization (\ref{prob:NAO}) is non-increasing under the \BMF{} update rules.
	\label{theorem:convegence of mc}
\end{theorem}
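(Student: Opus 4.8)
The plan is to decompose a single pass of \autoref{alg:NMC} into its three constituent updates --- the multiplicative update of $W$, the multiplicative update of $H$, and the projection of $Y$ --- and to show that none of them increases the objective $F(Y,W,H) := ||Y-WH||_F$. Writing $(Y^{(t)},W^{(t)},H^{(t)})$ for the iterate at the start of a pass and $W^{(t+1)},H^{(t+1)},Y^{(t+1)}$ for the successive partial updates, the three inequalities chain as
\[
F(Y^{(t)},W^{(t)},H^{(t)}) \ge F(Y^{(t)},W^{(t+1)},H^{(t)}) \ge F(Y^{(t)},W^{(t+1)},H^{(t+1)}) \ge F(Y^{(t+1)},W^{(t+1)},H^{(t+1)}),
\]
which is the claim. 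Along the way one records that feasibility is preserved: $Y \ge 0$ by the box constraints $0 \le Y_{ij} \le k$, and the multiplicative rules scale $W$ and $H$ entrywise by nonnegative ratios, so they stay in $\R_+$.

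For steps (1) and (2), the key observation will be that $Y$ is held fixed throughout both of them, and that $H$ (resp.\ $W$) is held fixed during the update of $W$ (resp.\ $H$). Hence step (1) is exactly the Lee--Seung multiplicative update for the ordinary NMF subproblem $\min_{W \ge 0} ||Y-WH||_F$ with $Y$ in the role of the data matrix, and step (2) is the symmetric update for $\min_{H \ge 0} ||Y-WH||_F$. I would then invoke the auxiliary-function (majorization--minimization) argument of \cite{lee2001algorithms}: construct $G(H,H')$ with $G(H,H)=||Y-WH||_F^2$ and $G(H,H') \ge ||Y-WH||_F^2$ for all feasible $H'$, verify that the multiplicative update is the minimizer of $G(\cdot,H')$, and conclude $||Y-WH^{(t+1)}||_F^2 \le G(H^{(t+1)},H^{(t)}) \le G(H^{(t)},H^{(t)}) = ||Y-WH^{(t)}||_F^2$; since $F \ge 0$, the same inequality holds without the squares. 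The $W$-update is handled identically after transposing. Because the paper already commits to the Lee--Seung updates, this step amounts to citing that result plus the remark that, with $Y$ frozen, each subproblem genuinely is a standard NMF problem in a single factor.

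For step (3), with $W$ and $H$ now fixed, minimizing $F$ over $Y$ is equivalent to minimizing the separable sum $\sum_{i,j}\big(Y_{ij}-(WH)_{ij}\big)^2$ over the product of intervals $Y_{ij}\in\{0\}$ (when $X_{ij}=0$) and $Y_{ij}\in[1,k]$ (when $X_{ij}=1$). The unique minimizer is the entrywise Euclidean projection of $WH$ onto this box, and a short case check shows that the piecewise formula in step (3), together with leaving $Y_{ij}=0$ off $\support(X)$ (where $Y$ was initialized to $X_{ij}=0$ and never touched), is precisely that projection. Since the $Y$ carried in from the previous pass is itself feasible, replacing it by the projection cannot increase $F$, which gives the last inequality.

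The computations behind the auxiliary function and the verification that step (3) is the box projection are routine. The only point requiring genuine care will be the bookkeeping that $Y$ is truly constant across steps (1)--(2), so that each multiplicative update is a bona fide Lee--Seung step rather than something subtly different; one should also flag, as is standard, that the multiplicative updates are well defined only when the denominators do not vanish, which is handled in the usual way (strictly positive iterates, or a vanishing regularizer). Beyond that I do not expect a real obstacle.
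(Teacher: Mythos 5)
Your proposal is correct and follows essentially the same route as the paper's proof: it invokes the Lee--Seung non-increasing property for the multiplicative $W$ and $H$ updates with $Y$ frozen, and then observes that the $Y$ update is the entrywise (box-projection) minimizer over the feasible set, chaining the resulting inequalities. Your version is slightly more careful on the side conditions (separating the $W$ and $H$ inequalities, feasibility, nonvanishing denominators), but the underlying argument is the paper's.
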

\begin{proof}
	Here we just need to show that \(||Y_{t-1}-W_{t-1}H_{t-1}|| \geq ||Y_{t}-W_{t}H_{t}||\) at an arbitrary iteration \(t\).\\ 
	As shown in \cite{lee2001algorithms}, the distance \( ||X - WH||_F\) is non-increasing under the multiplicative update rules. Applying this to the updated \(W_t,H_t\)  at the iteration \(t\), we have:
	\begin{equation}
		||Y_{t-1} - W_{t-1}H_{t-1}|| \geq ||Y_{t-1} - W_{t}H_{t}||.
		\label{eqn:conv_mu}
	\end{equation}
	The next step is updating the elements of \(Y_{t-1}\) using the update rule in step 3 of \autoref{alg:NMC}. Note that the update rule minimizes the entry-wise error of $|(Y_t)_{ij} - (W_tH_t)_{ij}|$ within the solution space for $Y$. Indeed, each matrix entry for $Y_t$ either has zero error (for $1 \leq (WH)_{ij} \leq k)$ or has minimal error (for $(WH)_{ij}\leq 1$ and $(WH)_{ij} \geq k$) which is an improvement over $|(Y_{t-1})_{ij} - (W_tH_t)_{ij}|$.  Hence, this step produces \(Y_t\) which further reduces the error
	\begin{equation}
		||Y_{t-1} - W_tH_t|| \geq || Y_t - W_tH_t||
		\label{equn:conv_update}
	\end{equation}
	From inequality (\ref{eqn:conv_mu}) and (\ref{equn:conv_update}), we get \(||Y_{t-1}-W_{t-1}H_{t-1}|| \geq ||Y_{t}-W_{t}H_{t}||\) as needed. Therefore, the objective function of the problem (\ref{prob:NAO}) is non-increasing under \BMF{} update rules.
\end{proof}

\subsubsection{Regularized \BMF{}}
Additionally, to aid in the Booleanization procedure, the factors $W$ and $H$ can be guided to be binary factors with an additional regularization during the optimization. Here we incorporate the $x^2-x$ based regularization used in \cite{miron178boolean, zhang2007binary}. Our regularized problem is formed as follows:
\begin{equation}
	\begin{aligned}
	& \underset{Y,W,H}{\text{minimize}}
	& & ||Y-W H||_F + \dfrac{1}{2} \lambda ||W^2 - W||^2_F \\
	& & & + \dfrac{1}{2} \lambda ||H^2 - H||^2_F\\
	& \text{subject to}
	& & 1 \leq Y_{i,j} \leq k, \text{ if } X_{i,j} = 1 \\
	& & & Y_{i,j} = 0, \text{ if } X_{i,j} = 0 \\
	& & & W \in \R_+^{N \times k},\\
	& & & H \in \R_+^{k \times M}.
	\end{aligned}
	\label{eqn: reg_mc}
\end{equation}
Notice that under the exact solution, the regularization terms will be zeros. Therefore, \autoref{theorem:bmf to mc} and \autoref{theorem:mc to bmf} are also true for the regularized problem (\ref{eqn: reg_mc}). With regularization, the update rules do not have the non-increasing property anymore, but empirically we observe reasonable convergence in practice. \autoref{fig: reg_mc_convergence} shows the convergence of the regularized BANMF on a synthetic dataset. Our implementation of the regularized \BMF{} is described in \autoref{alg:reg_mc}.

\begin{algorithm}
	\SetAlgoLined
	\KwIn{Boolean matrix \(X \), latent dimension \(k\), maximum iteration \(Niter\), regularization parameter \(\lambda\) }
	\KwResult{\(\hat{W}^{(\delta_W)}\), \(\hat{H}^{(\delta_H)}\)}
	Initialization: \(W = rand(N,k)\), \(H = rand(k,M)\), \(Y = X\) \\
	\lWhile{iter \(\leq\) Niter}
	{\\
		(1) Update \(W\) and \(H\) using NMF multiplicative update \\
		\(W_{ij} = W_{ij} \dfrac{(YH^T)_{ij} + 3\lambda W_{ij}^2}{(WHH^T)_{ij} + 2\lambda W_{ij}^3 + \lambda W_{ij}^2 }\) \\  
		\(H_{ij} = H_{ij} \dfrac{(W^TY)_{ij} + 3\lambda H_{ij}^2}{(W^TWH)_{ij} + 2\lambda H_{ij}^3 + \lambda H_{ij}^2}\)\\
		(2) Update \(Y_{ij}\) for \((i,j) \in support(X)\)\\
		\(Y_{ij} = \begin{cases}
			1 \quad \text{if} \quad (WH)_{ij} < 1 \\
			(WH)_{ij} \quad \text{if} \quad 1 \leq (WH)_{ij} \leq k \\
			k \quad \text{if} \quad (WH)_{ij} \geq k
		\end{cases}\)
	
	\textbf{endwhile}}
	Boolean algorithm is applied to identify \(\delta_W\) and \(\delta_H\) for \(\hat{W}^{(\delta_W)}\), \(\hat{H}^{(\delta_H)}\)
	\caption{Regularized \BMF{}}
	\label{alg:reg_mc}
\end{algorithm}
\begin{figure}
	\centering
	\includegraphics[width = 0.45\textwidth]{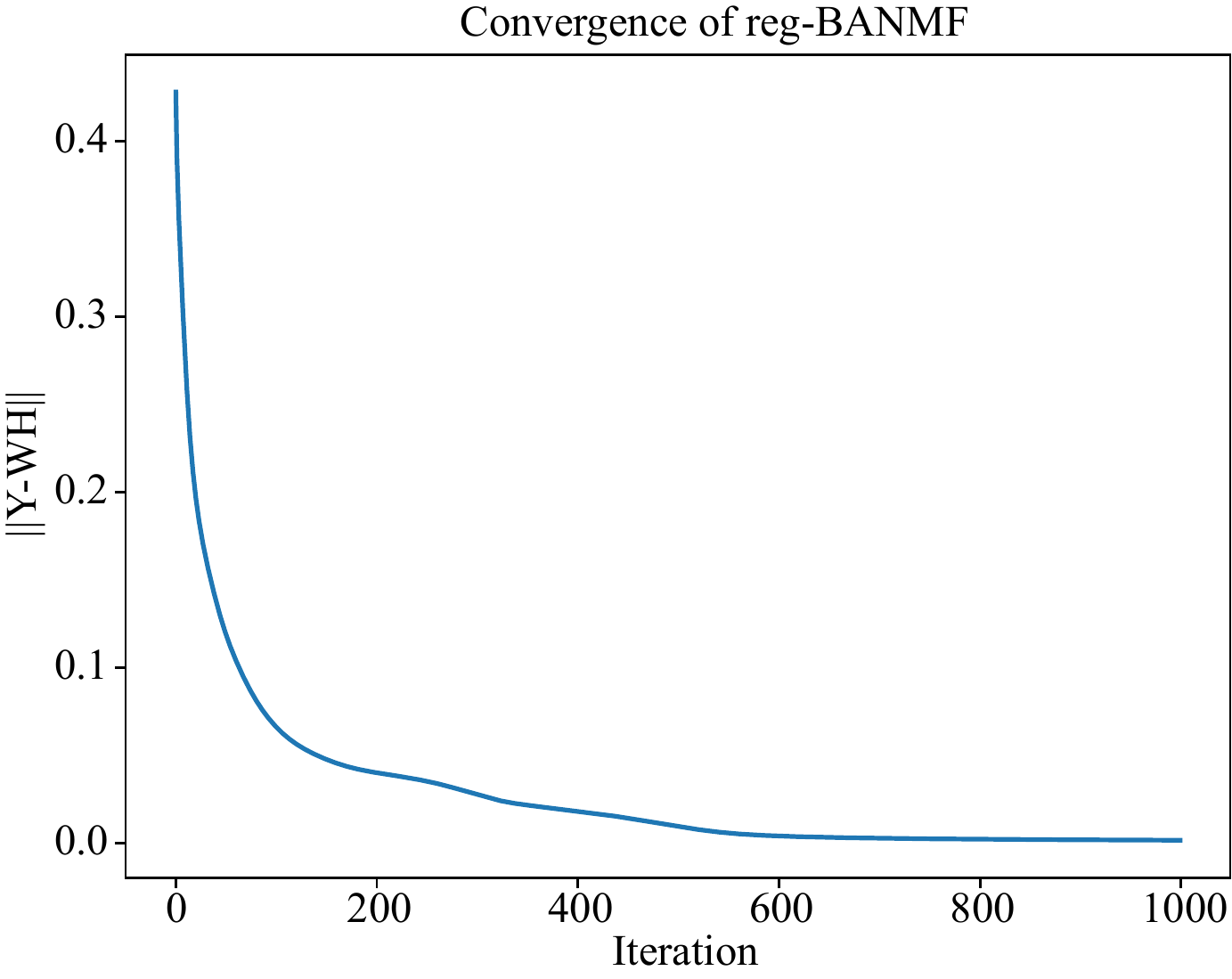}
	\caption{Convergence of regularized BANMF on a synthetic dataset.}
	\label{fig: reg_mc_convergence}
\end{figure}

\section{Experimental Results}

In this section, we experimentally evaluate the performance of the \BMF{} algorithm. For comparison we employ ASSO \cite{miettinen2008discrete}, NMF \cite{lee2001algorithms}, and post nonlinear penalty function (PNL-PF) \cite{miron178boolean} where the solutions are thresholded using Booleanization \autoref{alg:booleanization} to be binary. First, we demonstrate that if the non-negative rank and the Boolean rank are different, then \BMF{}, and in particular Regularized \BMF{}, outperform the other methods.   Next, we investigate how methods perform with regard to different noise and density levels.  We then show that \BMF{} outperforms its strongest contender, PNL-PF, in execution time for random matrices. Lastly, the methods are evaluated when applied to three real datasets: (1) UCI Zoo, (2) Congressional voting records, and (3) Lung cancer with probes.

\subsection{Generating Boolean synthetic data}
\label{generating data}
First, we briefly describe our process for generating Boolean synthetic data. For each experiment, we will generate random matrices 
$$X^{N \times M} = W^{M \times k} \bmp_B H^{ k \times m}$$
where
$$W_{ij}, H_{ij} \sim Bernoulli(p).$$ 
This stochastic process generates data with different features which probabilistically depend on the parameters. In each experiment, this process will be augmented to analyze statistical performance across various data constraints.

To construct a Boolean rank $k$ data matrix with a desired density $d$, we compute the probability that any given entry of $X$ is one, $P(X_{ij}=1) = P(\lor_{l=1}^k W_{il} \land H_{lj} = 1) = 1 - (1-P(W_{il}=1)*P(H_{lj}=1))^k$. By imposing that the densities of $W$ and $H$ are the same, and inverting the probability formula, we arrive with the equation for the density of $W$ and $H$ that correspond to the desired density of $X$, $\sqrt{1 - (1-d)^\frac{1}{k}}$.

\subsection{Different density levels}

In the first experiment, we investigated the performance of the  methods on random Boolean matrices with different density levels.  Here 100 different $50 \times 50$ matrices with $k=5$ were generated for each of three different density levels: $20\%, 50\%$ and $80\%$. 

\autoref{fig:synthetic_Bool_err} shows ASSO performs rather poorly across all density levels. For sparse matrices, BANMF and REG-BANMF perform quite well.  However for dense matrices, PNLPF outperforms every method.  Overall, (regularized) \BMF{} consistently performs well across different sparsity levels.

\begin{figure}
	\centering
	\includegraphics[width = 0.4 \textwidth]{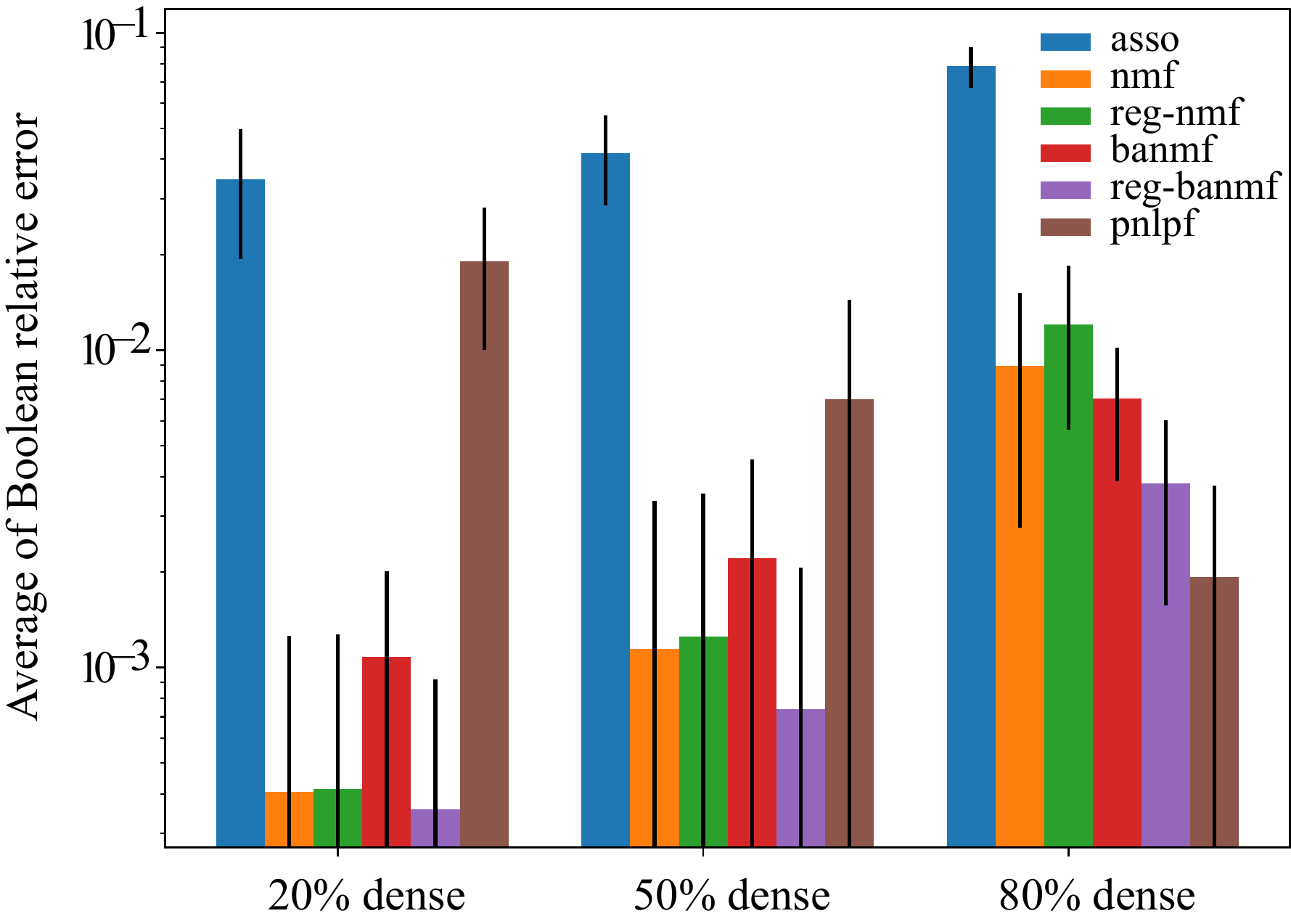}
	\caption{Performance on synthetic datasets at different density levels.}
	\label{fig:synthetic_Bool_err}
\end{figure}

\subsection{Different noise levels} 
	\begin{figure}
		\centering
		\includegraphics[width=0.4 \textwidth]{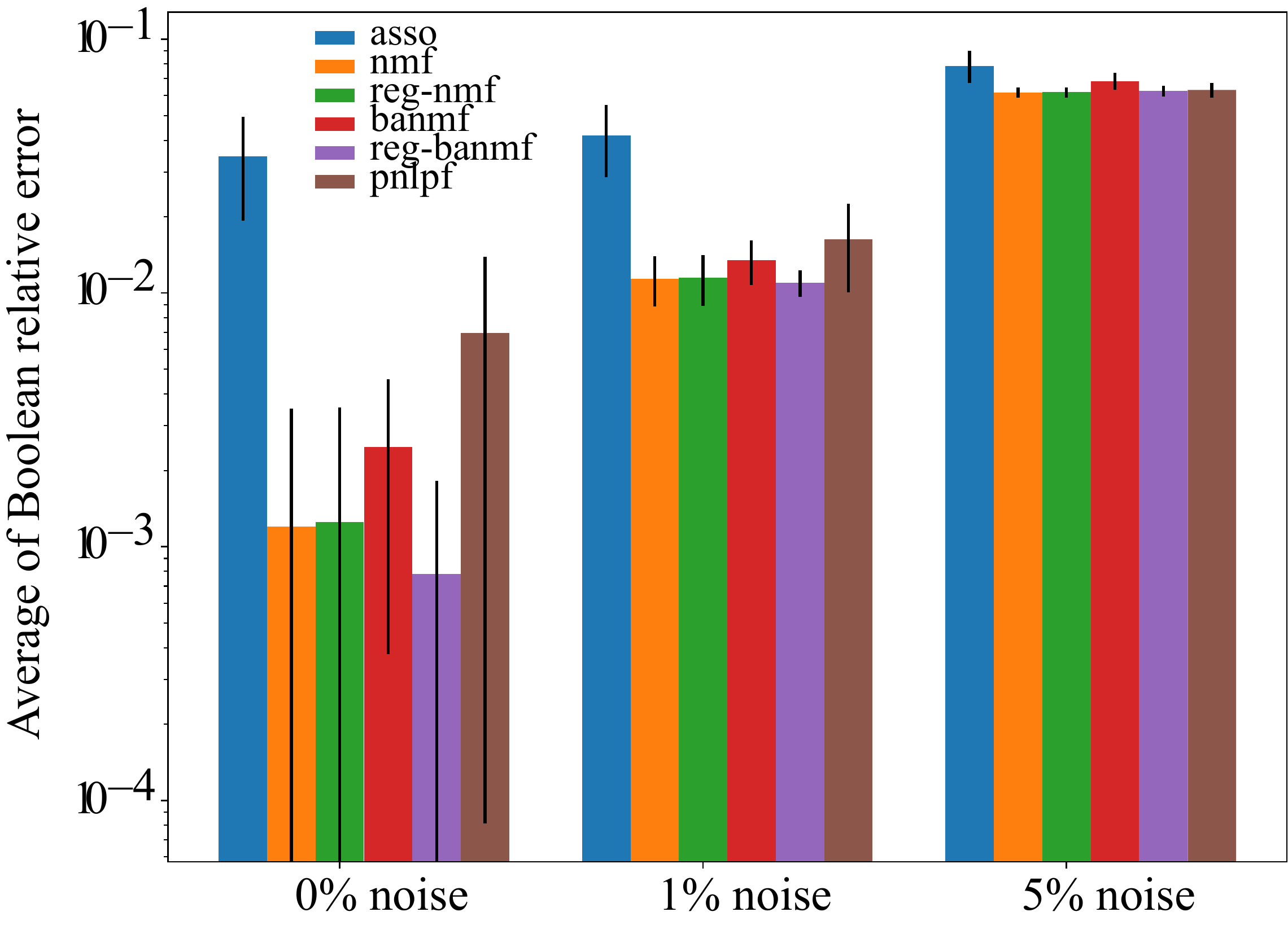}
		\caption{Mean and standard deviation of Boolean relative error of four methods under three noise levels across 100 datasets. (Regularized) \BMF{} are consistently outperform ASSO and PNL-PF.}
		\label{fig:noise_examples}
	\end{figure}

	The next experiment is to investigate the performance of all methods under the effect of noise. Here 100 different $50 \times 50$ matrices with $50\%$ density and latent space $k=5$ are generated with no noise.  
	Then Bernoulli noise is added to the synthetic data according to different noise thresholds. Concretely, 
	\[X^{50 \times 50} = W^{50 \times 5} \bmp_B H^{ 5 \times 50} +_f E
	\]
	where
	\[W_{ij}, H_{ij} \sim Bernoulli(p) \quad E_{i,j} \sim Bernoulli(p_E)\]
	and the flipping operation \(+_f\) is defined as
	\begin{equation}
		X_{ij} = \begin{cases}
			X_{ij} \quad & \text{if } E_{ij} = 0\\
			\neg X_{ij} \quad & \text{if } E_{ij} = 1.\\
		\end{cases}
	\end{equation}
	Each of the 100 matrices dataset is corrupted with  noise levels (\(0\%, 1\%\) and \(5\%\)).   \autoref{fig:noise_examples} shows the mean and standard deviation of relative error across the 100 datasets, in which regularized \BMF{}  consistently outperform ASSO and PNLPF. Moreover, \BMF{} outperforms PNLPF at lower noise thresholds. 

	\subsection{\BMF{} and NMF on different-rank data}
	\label{rank difference subsection}
	The NMF algorithm searches for a best nonnegative rank approximation to $X=WH$. With \BMF{} being algorithmically similar to NMF, we are naturally interested in the performance when there is a gap between the Boolean rank and the nonnegative rank.  Theoretically, it is known that the nonnegative rank is greater than or equal to the Boolean rank \cite{desantis2020factorizations}. To empirically investigate the effect of a gap between the ranks, we generate a suite of Boolean matrices where a gap between $\rank_+(X)$ and $\rankb(X)$ exists. This is done as follows.

	For each $N,M = 10,11, \dots, 50$, $k=2,3,\dots 6$, and density levels $25\%, 50\%,75\%$, we generate 5 Boolean matrices.  We then want to measure the difference $\rank_+(X) - \rankb(X)$ for each Boolean matrix $X$.  Unfortunately, computing either the nonnegative or Boolean rank is NP hard.  However we can efficiently estimate a lower bound on the rank gap.  Indeed by construction, $\rankb(X) \leq k$ and $\rank(X)\leq \rank_+(X)$.  Thus if $k\leq \rank(X)$, then 
	\[
	\rank_+(X) - \rankb(X) \geq \rank(X) - k.
	\]
	Thus for each generated matrix, we check that $k \leq \rank(X)$. If this check fails, we generate a new Boolean matrix with the same parameters. We then compute $\rank(X) - k$ to estimate the lower bound on the gap. 

	Figure~\ref{fig:mc_vs_nmf} depicts the average error of each method as a function of the lower bound estimation of the rank gap. For lower rank gap, many of the methods are comparable. However the error in using NMF grows as the gap between the nonnegative and Boolean ranks grows, while the regularized \BMF{} algorithm is constant and low, and the unregularized \BMF{} falls somewhere in between. The regularized NMF's error also grows with the rank gap, while PNLPF seems to stay constant with a higher error level compared to regularized BANMF. 

	\begin{figure}
		\centering
		\includegraphics[width = 0.6 \textwidth]{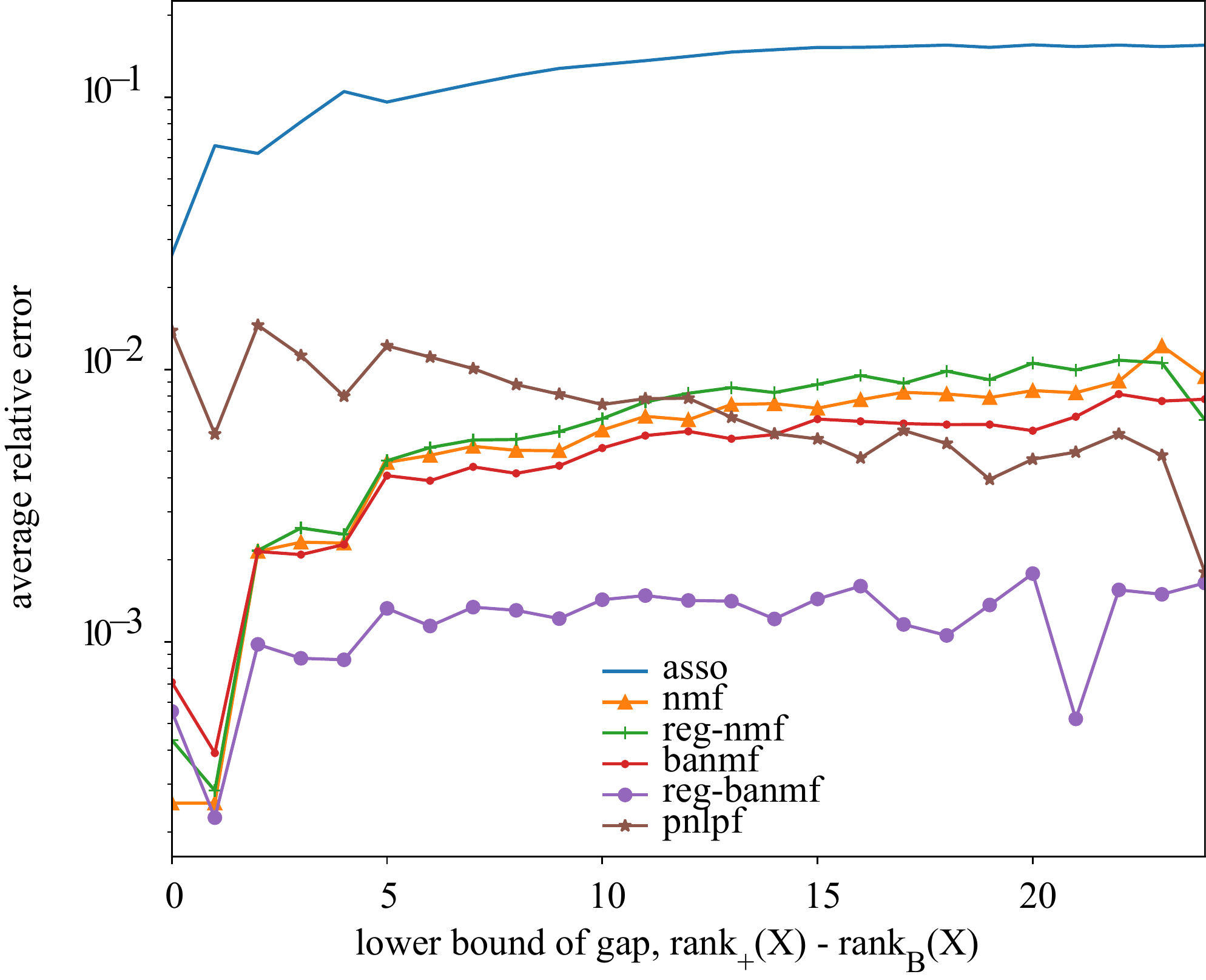}
		\caption{Performance of algorithms on Boolean/nonnegative rank gap dataset. }
		\label{fig:mc_vs_nmf}
	\end{figure}

\subsection{Execution time}
\label{subsection execution time}
	Since PNLPF's performance is comparable with \BMF{} in some previous cases, we further compare the complexity between PNLPF and \BMF{}.	Given the high level of nonlinearity of PNLPF, it is expected to be more computationally expensive than \BMF{} algorithm. 
	In the first comparison, the execution times of 1000 iterations for datasets of different data sizes are measured. \autoref{fig:exec time ratio} shows the execution time in log-scale of PNLPF and \BMF{}. The results show that when the data size is getting larger, PNLPF is increasingly more expensive. For example, at the data size of \(500 \times 500\), the execution time of PNLPF is about \(2^4 = 16\) times longer than of \BMF{}. 
	In the second comparison, \autoref{fig:exec time convg} shows the time that each method converges in Boolean error rate for different datasets. The result shows that as the data size increases, PNLPF takes a longer time to converge. In the conclusion, \BMF{} is less computationally expensive compared to PNLPF.
\begin{figure}
	\centering
	\includegraphics[width = 0.4 \textwidth]{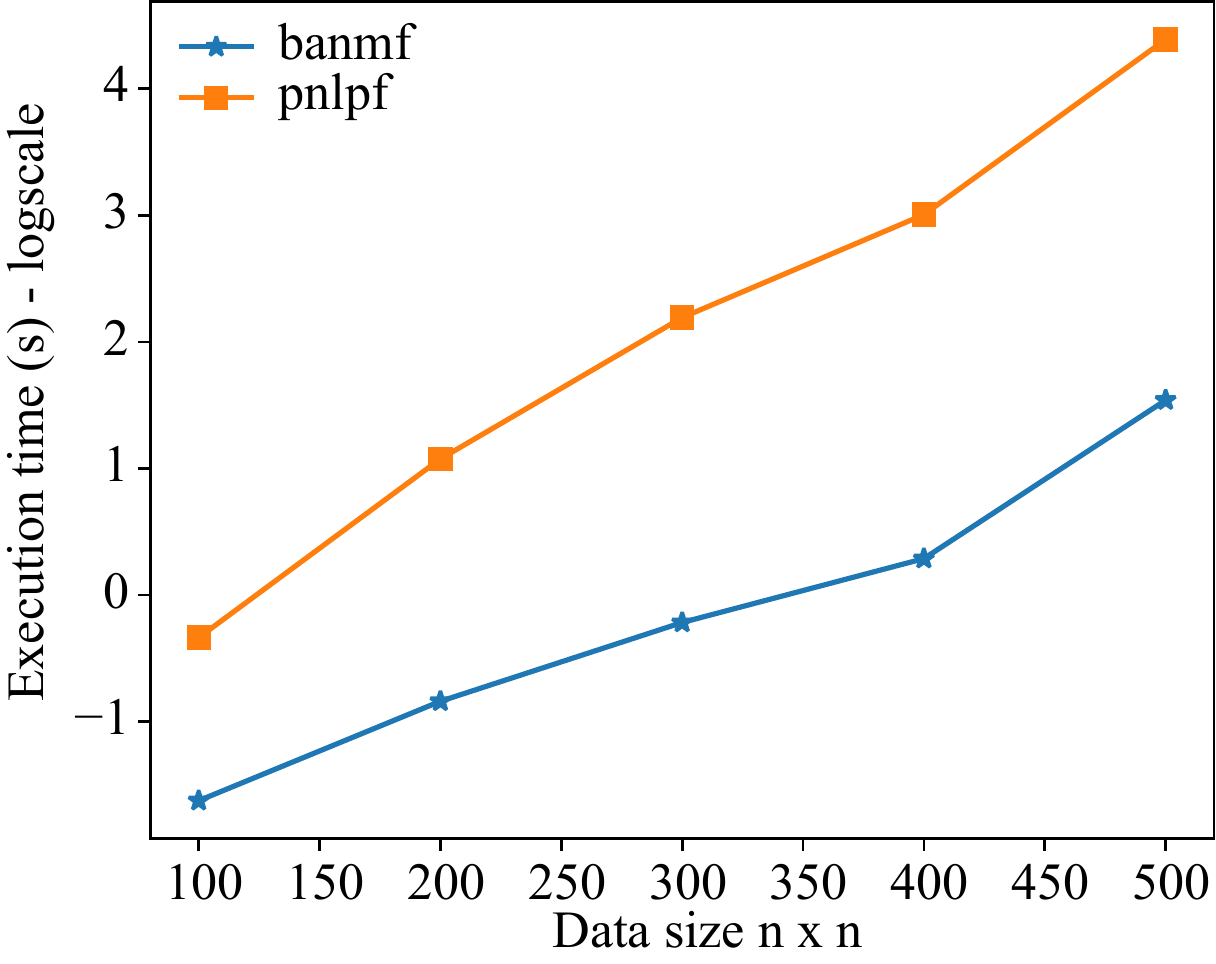}
	\caption{Execution time in log-scale of \BMF{} and PNLPF for 5 different size datasets. Overall, PNLPF is more expensive than \BMF{}, and even more expensive when the data size is larger.}
	\label{fig:exec time ratio}
\end{figure}

\begin{figure}
	\centering
	\includegraphics[width = 0.6 \textwidth]{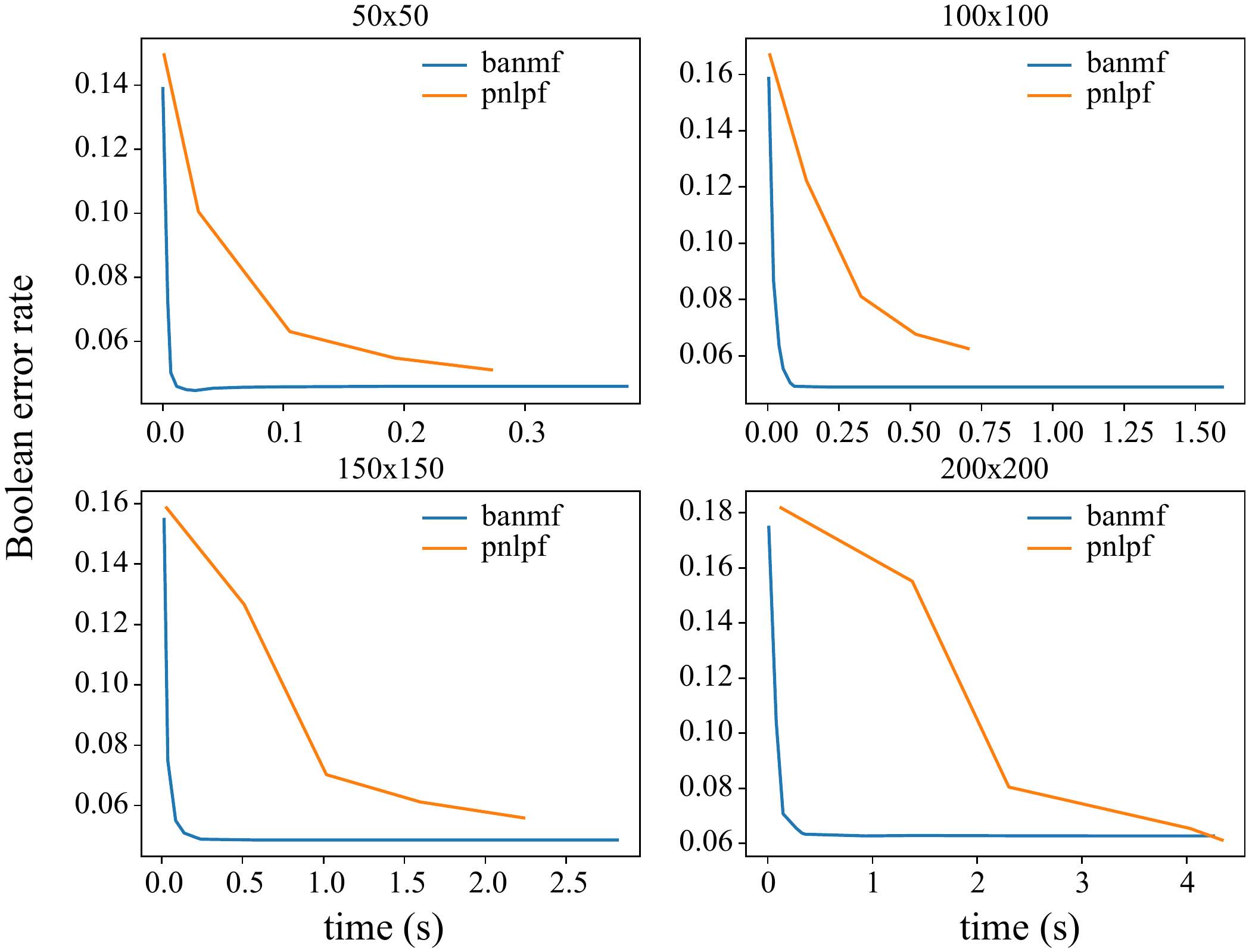}
	\caption{Convergence time in Boolean error rate of BANMF and PNLPF. As the data is larger, PNLPF takes increasingly longer time to converge.}
	\label{fig:exec time convg}
\end{figure}

\subsection{Application in real data}
In this section, we demonstrate the performance of (regularized) \BMF{} and the other methods  on three real datasets: UCI Zoo, Congressional voting records and Lung cancer with probes LUCAP0. 


\subsubsection{UCI zoo dataset}
We first study the performance of BNMF-MC on decomposing the UCI zoo dataset \footnote{https://archive.ics.uci.edu/ml/datasets/Zoo} \cite{Dua:2019}. This dataset consists of 101 animals and 14 binary features, which results in a \(101 \times 14\) Boolean matrix. \autoref{fig:uci_zoo_Bool} shows that regularized \BMF{} and NMF have the best performance, especially with higher latent dimensions. Furthermore, \autoref{fig:uci_zoo_heatmap} shows the original data and reconstructed data from rank-3 decompositions. All methods seem to capture the Boolean structure of the data.

\begin{figure}
	\centering
	\includegraphics[width = 0.5 \textwidth]{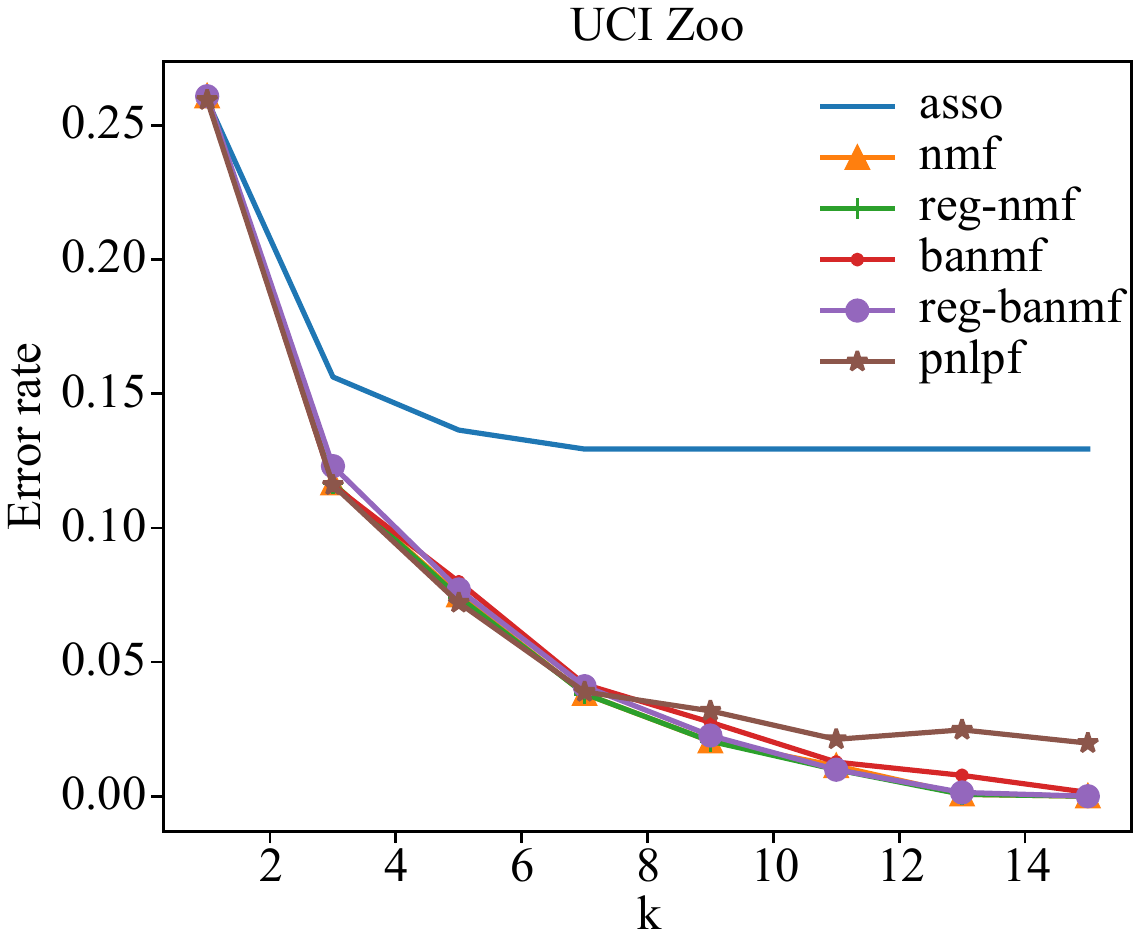}
	\caption{UCI Zoo dataset. Regularized BANMF and NMF have the best performances, especially with higher latent dimensions.}
	\label{fig:uci_zoo_Bool}
\end{figure}

\begin{figure}
	\centering
	\includegraphics[width = 0.5 \textwidth]{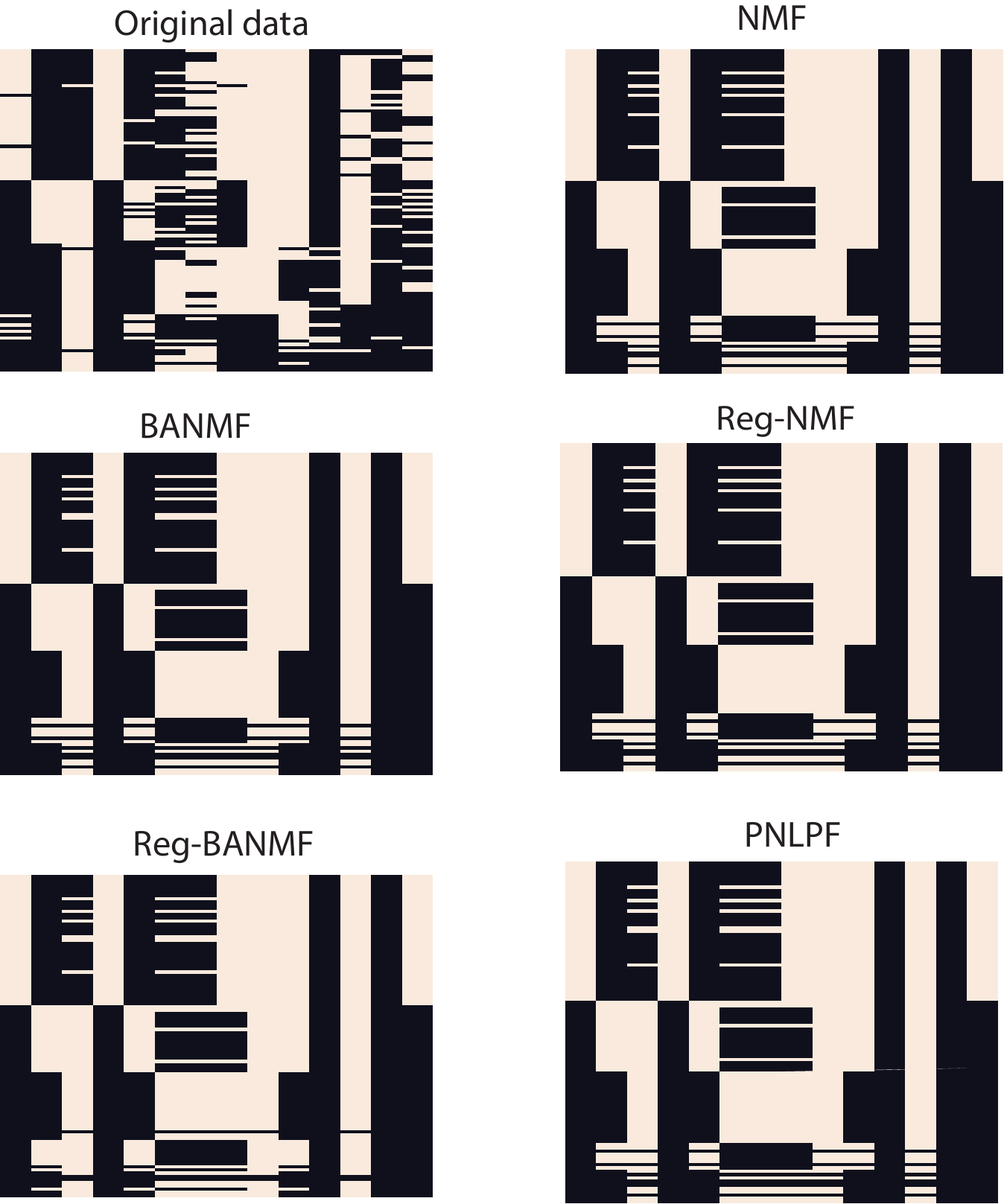}
	\caption{Original data versus reconstructed data from a rank-3 decomposition of (regularized) BANMF and other methods. All methods seem to capture the Boolean structure of the data.}
	\label{fig:uci_zoo_heatmap}
\end{figure}

\subsubsection{Congressional voting records dataset}
The second experiment uses the Congressional Voting Records Dataset from UCI dataset \cite{Dua:2019}. This data includes the votes of the U.S. House of Representatives Congressmen on the 16 key votes. After removing samples with missing data, the Boolean data matrix has the dimension of \(232 \times 16\). \autoref{fig:housevotes} shows that (regularized) NMF are better with (regularized) \BMF{} are slightly behind.

\begin{figure}
	\centering
	\includegraphics[width = 0.5 \textwidth]{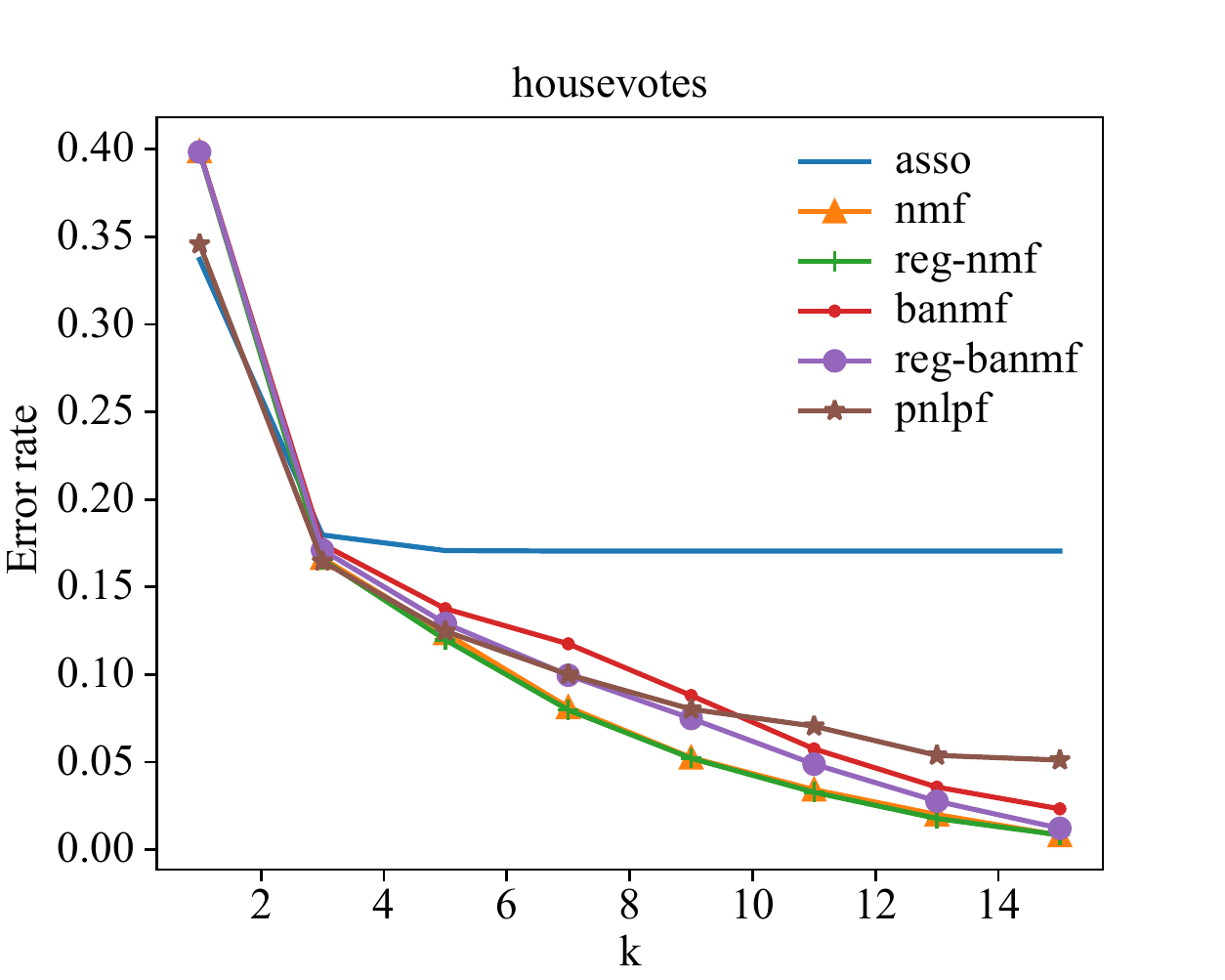}
	\caption{Congressional voting records dataset - (regularized) NMF has the lowest errors with (regularized) \BMF{} are slightly behind.}
	\label{fig:housevotes}
\end{figure}

\subsubsection{Lung cancer with probes dataset (LUCAP0)}
The third data set is the LUCAP0 (Lung Cancer with Probes) data which contains toy data generated artificially by causal Bayesian networks with binary variables, which being used in \cite{miron178boolean}. This model is a medical application for diagnosing lung cancer.\footnote{http://www.causality.inf.ethz.ch/data/LUCAS.html}. The dimensions of the data are \(2000 \times 143\). For this dataset, \autoref{fig:lucap0} shows that (regularized) NMF are slightly better here. Additionally, PNLPF performs well with low latent dimension, but is outperformed by regularized \BMF{} when the latent dimension gets larger. This PNLPF's behavior can also be observed from previous datasets. Moreover, given this size of the data, PNLPF is more computationally expensive than \BMF{} and NMF (see \autoref{subsection execution time} for the execution time comparison.)

\begin{figure}
	\centering
	\includegraphics[width = 0.5 \textwidth]{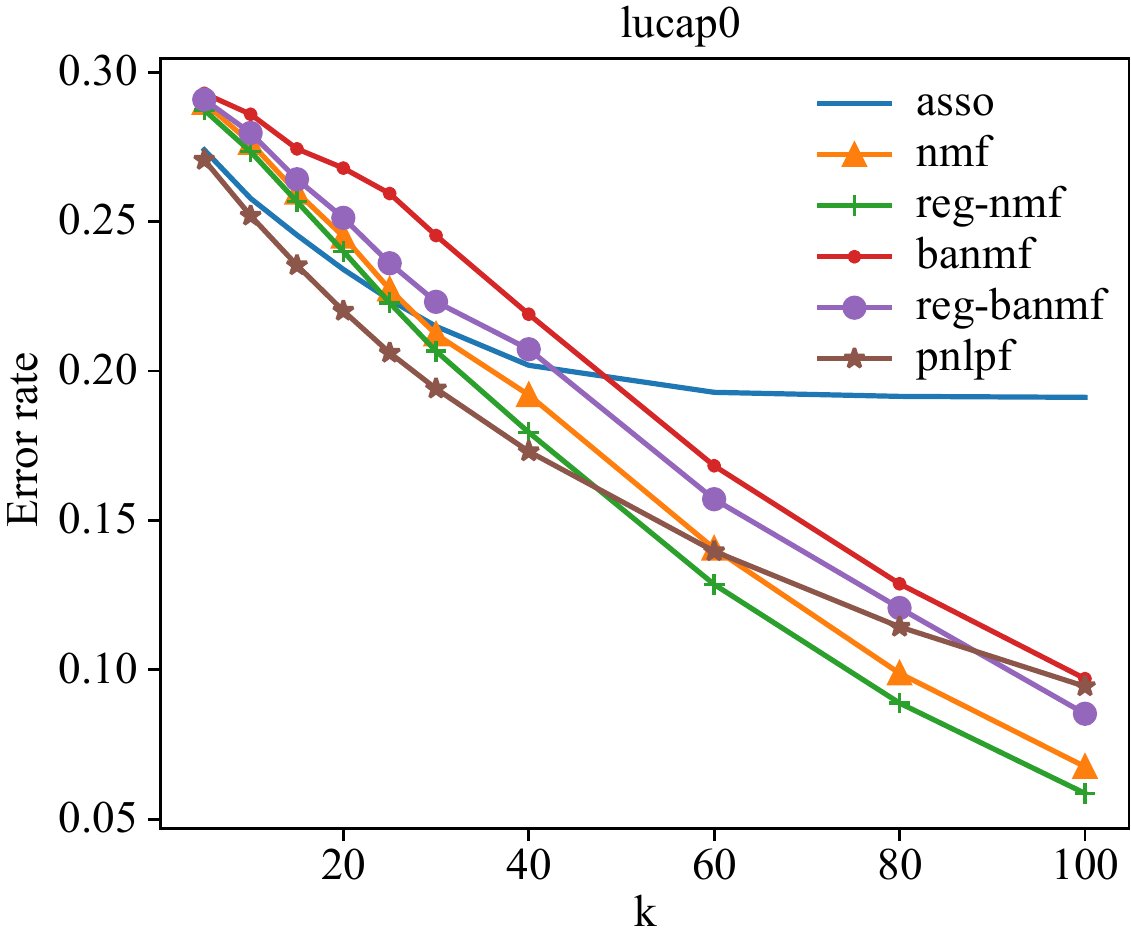}
	\caption{Lung cancer with probes dataset. (Regularized) NMF has the lowest error with higher latent dimension.}
	\label{fig:lucap0}
\end{figure}

\section{Discussion}
In this paper, we propose an approach to solve the Boolean matrix factorization problem using a nonnegative optimization with the constraint over an auxiliary matrix. With this constraint, the algorithm can overcome the difference between Boolean algebra and normal algebra to capture the Boolean structure of the data. The solution is then thresholded to give a solution for BMF problem. We show that the exact solution spaces of the two problem are equivalent, and that the \BMF{} algorithm has the nonincreasing property. The (regularized) \BMF{} algorithm is shown to be on outperform other methods in experiments on synthetic examples at particular density, and noisy regimes.  Furthermore, these algorithms prove to be superior methods when  there is a gap between Boolean rank and nonnegative rank. Outside these regimes, the (regularized) \BMF{} are still competitive with other state of the art algorithms.


\bibliography{main.bib}
\bibliographystyle{plain}

\end{document}